\def\E{{\mathbb E}}
\def\Var{{\rm Var}}
\def\argmax{\mathop{\rm arg\,max}}%
\def\argmin{\mathop{\rm arg\,min}}%
\def\diag{\text{diag}}
\newtheorem{lemma}{Lemma}
\title{\Large Digital Twin Calibration for Biological System-of-Systems: \\
Cell Culture Manufacturing Process }
\author{
\href{https://orcid.org/0009-0005-4356-4345}{\includegraphics[scale=0.06]{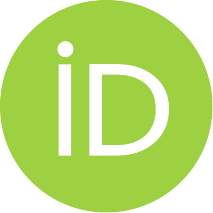}\hspace{1mm}Fuqiang Cheng} \\
	Northeastern University\\
\\
\And
\href{https://orcid.org/0000-0001-9563-4927}{\includegraphics[scale=0.06]{orcid.pdf}\hspace{1mm}Wei Xie}\thanks{Corresponding author. Email: \url{w.xie@northeastern.edu}} \\
	Northeastern University\\
	\And
\href{https://orcid.org/0000-0001-9555-7132}{\includegraphics[scale=0.06]{orcid.pdf}\hspace{1mm}Hua Zheng} \\
Northeastern University\\
\date{}
}
\begin{document}
\maketitle

\begin{abstract}
Biomanufacturing innovation relies on an efficient Design of Experiments (DoEs) to optimize processes and product quality. Traditional DoE methods, ignoring the underlying bioprocessing mechanisms, often suffer from a lack of interpretability and sample efficiency. This limitation motivates us to create a new optimal learning approach for digital twin model calibration. In this study, we consider the cell culture process multi-scale mechanistic model, also known as Biological System-of-Systems (Bio-SoS).
This model with a modular design, composed of sub-models, allows us to integrate data across various production processes.  
To calibrate the Bio-SoS digital twin, we evaluate the mean squared error of model prediction and develop a computational approach to quantify the impact of parameter estimation error of individual sub-models on the prediction accuracy of digital twin, which can guide sample-efficient and interpretable DoEs.
\end{abstract}

\keywords{Digital Twin Calibration \and Sequential Design of Experiments \and Cell Culture \and Biomanufacturing Process  \and  Biological System-of-Systems  \and Linear Noise Approximation}

\section{INTRODUCTION}
\label{sec:intro}

To support interpretable predictions and optimal control of biomanfuacturing processes, in this paper, we develop a digital twin calibration approach for multi-scale bioprocess mechanistic model or Biological System-of-Systems (Bio-SoS) \citep{Zheng2024} characterizing causal interdependence from molecular- to cellular- to macro-kinetics. Even though this study is motivated by cell culture process, it can be extended to calibrate general Bio-SoS with modular design. Basically, cell culture process dynamics and variations depend on the modules: (1) a single cell mechanistic model characterizing each living cell behaviors and their interactions with environment; (2) a metabolic shift model characterizing the change of cell metabolic phase and behaviors 
as a response to culture conditions and cell age; and (3) macro-kinetic model of a bioreactor system composed of many living cells under different metabolic phases.    

The benefits of considering the Bio-SoS mechanistic model with \textit{modular design} include: a) support flexible manufacturing through assembling a system of modules to account for biomanufacturing processes under different conditions and inputs; and b) facilitate the integration of heterogeneous data from different production processes, such as 2D culture and 3D aggregate culture for Induced Pluripotent Stem Cells (iPSCs) \citep{Keqi2024, Zheng2024}. By incorporating the structure property of the Bio-SoS mechanistic model into the calibration method, we can quantify how the model uncertainties or approximation errors of different modules interact with each other and propagate through the reaction pathways to the prediction of outputs (e.g., yield and product quality attributes), which can guide  interpretable and most informative {Design of Experiments (DoEs)} to efficiently improve model fidelity with less experiments.


The model uncertainty quantification approaches for digital twin calibration can be divided into two main categories: Bayesian and frequentist approaches \citep{CORLU2020}. Bayesian approaches treat unknown model parameters as random variables and quantify our belief by posterior distributions. 
It involves specifying prior distributions for model parameters and updating these distributions based on the information from observed data by applying Bayes' theorem. On the other hand, frequentist approaches rely on traditional statistical methods, such as maximum likelihood estimation, to find parameter values that maximize the likelihood of observed data. 
Those estimation approaches could be faster under the situations when there not exist conjugate priors and posterior sampling update is computationally expensive.

For the propagation of input and model uncertainty to simulation outputs, to save computational budget and time, a metamodel is often used to approximate the response surface, especially for complex stochastic systems. In the classical calibration approach \citep{kennedy2001}, {Gaussian Processes (GPs) }are often used to model the mean response and discrepancies, deriving the posterior distribution for predictions and guiding the design of experiments. 
Although GP metamodel is often used to propagate from inputs to outputs (e.g., mean response),
it's hard to interpret and difficult to leverage prior knowledge of the real system mechanisms (such as bioprocessing reaction network structure). To enhance interpretability and sample efficiency, the mechanistic model structure will be employed to construct the response surface in our study.

Calibration criteria play a critical role to guide DoEs for digital twin calibration. 
Mean response and Mean Squared Error (MSE) are widely used criteria for assessing model prediction accuracy. \cite{Wu2014} shows Kennedy's method, which models the mean output as a Gaussian process, may lead to asymptotically $L_2$-inconsistent. They modify the criteria into $L_2$ norm of the discrepancy between two system outputs and propose a $L_2$-consistent calibration method, which has 
an optimal convergence rate.
  

To support process prediction and optimal control, in this paper, we aim to calibrate the Bio-SoS mechanistic model with a modular design to improve its predictive accuracy and reduce the MSE of the process output prediction. We employ the Maximum Likelihood Estimation (MLE) method to estimate model parameters and utilize bootstrap techniques to quantify estimation errors across different modules. To optimize the digital twin calibration policy and guide the most informative data collection, we derive a gradient-based approach that follows the steepest descent search in policy parameter space, making the learning process more interpretable. 
This strategy utilizes the Linear Noise Approximation (LNA) for uncertainty propagation and constructs a surrogate model for MSE. Concurrently, we preserve mechanistic information by solving the Ordinary Differential Equations (ODEs) from LNA using Euler's method. In an online setting, a calibration policy is iteratively updated by using a stochastic gradient method where the gradient of MSE with respect to the calibration policy parameter follows the backward direction of the uncertainty propagation. The proposed calibration approach accounts for the Bio-SoS mechanism structure and quantifies Bio-SoS module error interaction and their propagation through mechanistic pathways to output prediction, which guides the optimal sequential {DoEs} to efficiently improve model fidelity and prediction accuracy.

In sum, we develop an interpretable and sample efficient calibration approach for a multi-scale bio-process mechanistic model so that the digital twin of the cell culture process can improve process prediction and support optimal control. Even though this paper uses cell culture as a motivation example of Bio-SoS, the proposed calibration approach can be extendable to general biomanufacturing systems. The key contributions of the proposed calibration approach and the benefits are summarized as follows.
\begin{itemize}
    \item We proposed a new sequential DoE method for calibrating a multi-scale bioprocess mechanistic model by updating the parameters using closed-form gradients that incorporate the mechanistic information. 
    
    \item  We assess the MSE of the process output prediction and use a LNA-based metamodel along with Euler's method to estimate how model uncertainty propagates through Bio-SoS mechanism pathways and impacts on the output prediction accuracy. This approach can advance our understanding on how the errors in individual module parameters affect the overall accuracy of digital twin model prediction.
    
    \item Built on the LNA, we further develop a gradient-based policy optimization to guide most informative dsign of experiments and support sample-efficient optimal learning. 
\end{itemize} 

The organization of the paper is as follows. In Section~\ref{sec: MotivationExample}, 
we describe the multi-scale mechanistic model. Then, we propose the novel calibration method for the multi-scale model in Sections~\ref{sec:calibration}, which includes two main procedures: model inference and policy update. A stochastic gradient method is developed for the model inference in Section~\ref{sec:MLE}. 
The calibration policy gradient estimation and update method is further developed in Section~\ref{sec:optimal learning}. Then we empirically validate our method in Section~\ref{sec: empirical study} and conclude this paper in Section~\ref{sec: conclusion}.


\section{Problem Description and Bio-SoS Mechanistic Model}
\label{sec: MotivationExample}


In this study, we explore a multi-scale bioprocess mechanistic model designed to elucidate the interdependencies that exist across molecular, cellular, and macroscopic levels within a cell culture process. This comprehensive model integrates the complexities of biological interactions within and between these scales and extends applicably to a broader range of biological systems, referred to as biological system-of-systems (Bio-SoS). A detailed summary of the variable notations used within our model is provided in Table~\ref{table:variables}.

Our model is based on the structured Markov chain, capturing both the dynamics and variability inherent in cellular processes through a state transition probabilistic model. Each individual cell operates as a complex system, and collectively, numerous cells at various metabolic phases within the bioreactor form an intricate system of systems. This interaction is depicted in Figure~\ref{figure:Bio-SoS}, where three distinct metabolic phases are considered. Cells interact with each other by altering their environment, through nutrient uptake and the production of metabolic wastes, and in turn, respond to these environmental changes.

\begin{table}[h!]
\centering
\begin{tabular}{| p{0.5cm} | p{6.5cm} | p{0.5cm}  | p{7.3cm} |}
\hline
$\Tilde{\mathbf{s}}$ & Single cell state & $\pmb s$ & Macro-state \\
$r$ & Reaction indices ($r=1,2,\ldots,R$) & $Z_t$ & Cell phases for single cell ($Z_t=i, i=0,1,\ldots,I$) \\
$X_{i,t}$ & Cell density in $i$-th phase ($i=0,1,\ldots,I$) & $m$ & State component indices ($m=1,2,\ldots,M$) \\
$t$ & Time ($t=0,1,2,\ldots,T$) & $k$ & Calibration iteration number ($k=1,2\ldots,K$) \\
$\mathbb{P}_{ii'}$ & Transition probability from $i$- to $i'$-th phase & $\pmb\beta$ & Parameters for phase shift model \\
$\pmb\alpha_i$ & Model parameters for cells in phase $i$ & $\pmb\theta^c$ & True model parameters \\
$\widehat{\pmb\theta}_k$ & Estimated model parameters at iteration $k$ & $\pmb\omega_k$ & Policy parameters \\
$\lambda$ & Learning rate for updating parameter & $\gamma_k$ & Learning rate for updating policy  \\
$\pmb{u}_k$ & Virtual model update function & $\Tilde{\pmb\theta}_k$ & bootstrapping estimation for $\widehat{\pmb\theta}_k$  \\
\hline
\end{tabular}
\caption{Summary of key variables and their descriptions}
\label{table:variables}
\end{table}

At any time $t$, for each single cell in the $i$-th metabolic phase, denoted by $Z_t=i$ with $i\in\{0,1,\ldots, I\}$ (i.e., growth, stationary, death phases, etc), the dynamic behaviors of its metabolic network can be characterized by a stochastic model specified by parameters $\pmb{\alpha}_i$ and
a metabolic phase shift model with $\mathbb{P}_{ij}(\pmb{s}_t; \pmb{\beta})$ representing the transition probability from $Z_t=i$ to $Z_t=j$.  
Therefore, in this paper, we focus on calibrating the mechanistic model of the Bio-SoS specified by the calibration parameters $\pmb{\theta} \equiv \{\pmb{\alpha}_i, \pmb{\beta}\}$: (1) $\pmb{\alpha}_i$ characterizing cell dynamics in $i$-th metabolic phase or class; and (2) $\pmb{\beta}$ characterizing the phase shift probability that impacts the percentage of cells in each phase or class.
\begin{figure}[htbp]
    \centering
    \includegraphics[width=0.9\linewidth]{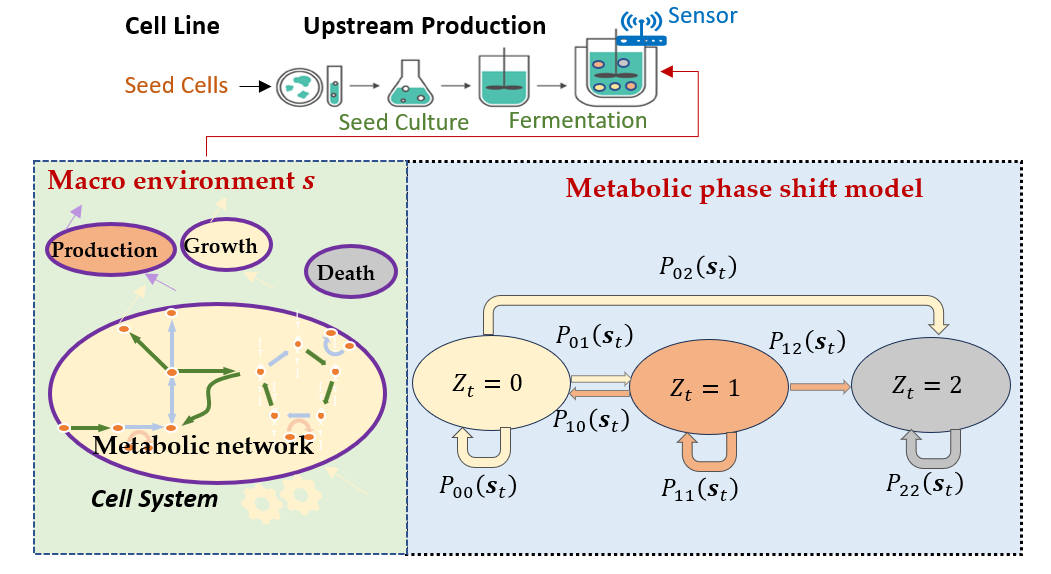}
    \caption{An illustration of the multi-scale mechanistic model for cell culture process and Bio-SoS.}
    \label{figure:Bio-SoS}
\end{figure}

\subsection{Single Cell Model}

This section describes a stochastic metabolic model to understand the dynamics of a single cell across different metabolic phases: growth ($Z=0$), production ($Z=1$) and so on. The model includes two main components: the Stochastic Modular Reaction Networks (SMRN) and the metabolic phase shift dynamics. Following the study \cite{anderson2011continuous}, the SMRN can be constructed $\Tilde{\mathbf{s}}^i_{t+1}=\Tilde{\mathbf{s}}^i_t+\mathbf {N} \cdot \pmb R_s^i(t)$,
where $\Tilde{\mathbf{s}}^i_t$ is the state for cells in $i$-th phase, $\mathbf{N}$ represents the stoichiometric matrix characterizing the structure of molecular reaction network, $\pmb R_s^i(t)$
follows a \textit{multivariate Poisson process} with molecular reaction rate vector $\pmb{\nu}(
\pmb{s}_t ;
\pmb{\alpha}_i)$. This reaction rate depends on macro-environment $\pmb{s}_t$, typically modeled using a Michaelis-Menten (MM) equation \citep{Sarantos2018,Keqi2024}.

The metabolic shift model 
addresses the transitions between metabolic phases based on environmental changes. At any time $t$, we have the metabolic phase-shift probability matrix $\mathbb{P}(\pmb{s}_t ;\pmb{\beta})$ depending on the environmental condition $\pmb{s}_t $, where each  element represents
\begin{equation}\nonumber
\mathbb{P}_{ii'}(\pmb{s}_t ) \equiv \mathbb{P}[Z_{t+1}=i'|Z_t=i]
~~~ \mbox{for} ~~~ i,i'=0,1,\ldots, I.
\end{equation}

\subsection{Macro-Kinetic State Transition} \label{subsec:macrometabolicShfit}


Suppose the environmental condition $\pmb{s}_t$ in bioreactor is homogeneous over space.  
The cell population dynamics are characterized by the evolution of cell densities in each metabolic phase $i$, i.e.
\begin{equation}\label{eq: cell growth model}
    \frac{\text{d} X_{i,t}}{\text{d} t} =\mu_{i}{(\pmb s_t )}X_{i,t}+ \sum_{i' \neq i}  \mathbb{P}_{i'i,t}({\pmb s_t }) X_{i',t},
\end{equation}
where $\mu_{i}{(\pmb s_t )}$ represents cell growth rate and the second term represents the instantaneous cells metabolic shifting from $i'$-th phase to $i$-th phase. Let $\pmb{s}_t$ denote metabolite concentrations (i.e., the number of molecules per unit of volume) in the system at time $t$. Define $\Delta \tilde{\mathbf{s}}_{t+1}^{(i,n)}$ as the change in metabolic concentration for the $n$-th cell in phase $i$, calculated as $\Tilde{\mathbf{s}}^{(i,n)}_{t+1} - \Tilde{\mathbf{s}}^{(i,n)}_t$. Then at any time $t$, given the density $X_{i,t}$ of cells in phase $i$, the overall change of metabolite concentration is the sum of the contributions from individual cells,
\begin{equation}\label{eq: metabolic network model}
    \pmb{s}_{t+1} - \pmb{s} _{t} =\sum_{i=0}^{I} \sum^{X_{i,t}}_{n=1} \Delta\tilde{\mathbf{s}}^{(i,n)}_{t+1}=\sum_{i=0}^{I}\mathbf {N}\pmb{R}^{i}(t),
\end{equation}
where $\pmb{R}^{i}(t)$ is a multivariate Poisson process with
reaction rate vector $X_{i,t}\pmb{v}(\pmb{s}_t;\pmb{\alpha}_i)$; as detailed in Eq.~(10) from \cite{Zheng2024}. Without losing generality, in the following sections, we consider a two-phase mechanistic model with parameters  $\pmb\theta=[\pmb\alpha_0,\pmb\alpha_1,\pmb\beta]$. 

\subsection{State Transition Probability}\label{subsec: approximated model}
In a short time interval $[t, t+\mathrm{d}t]$,  the change of state $\pmb{s}_t$ is denoted by
$    \mathrm{d} \pmb{s}_t=\pmb{s}_{t+\text{d}t}-\pmb{s}_t =\sum_{i=0}^{1}\mathbf {N}\text{d}\pmb{R}^{i}(t)=\mathbf {N}\text{d}\pmb{R}(t),$
where $\pmb{R}(t)$ is a multivariate Poisson process with parameter 
$\sum_{i=0}^{1}X_{i,t}\pmb\nu( \pmb{s}_t ;\pmb\alpha_i)$. Then $\E (\text{d}\pmb{R}\mid \pmb {s}_t)=\sum_{i=0}^{1}X_{i,t}\pmb\nu(\pmb{s}_t ;\pmb\alpha_i)$ and $\Var (\text{d}\pmb{R}\mid \pmb {s}_t)\approx \diag\{\sum_{i=0}^{1}X_{i,t}\pmb\nu^r(\pmb{s}_t ;\pmb\alpha_i)\}$, with $r=1,2,\cdots,R$ representing the component indices of $\pmb\nu$. The approximation is made under the assumption that the time interval is small, so the state doesn't change too much and different reactions can be treated independent. By forming as a Stochastic Differential Equations (SDEs), we have
\begin{align}\nonumber
\mathbf {N} \text{d}\pmb{R}(t)\mid\pmb s_t
&=\mathbf {N}\E (\text{d}\pmb{R}\mid \pmb {s}_t)\text{d}t+\mathbf {N}\Var (\text{d}\pmb{R}\mid \pmb {s}_t)\text{d}\pmb{B}_t\\\label{eq_fccccc}
&\approx \mathbf {N}\sum_{i=0}^{1}X_{i,t}\pmb\nu( \pmb{s}_t ;\pmb\alpha_i)\text{d}t+\left\{\mathbf {N} \diag\{\sum_{i=0}^{1}X_{i,t}\pmb\nu^r( \pmb{s}_t ;\pmb\alpha_i)\}\mathbf {N}^\top\right\}^{\frac{1}{2}}\text{d}\pmb{B}_t,
\end{align}
where $\pmb{B}_t$ denotes standard Brownian motion. 
{By Euler-Maruyama method \citep{Bayram2018}, a normal approximation can be applied to (\ref{eq_fccccc})}
\begin{align}\nonumber
    \mathbf {N}\Delta\pmb{R}(t)\mid \pmb s _t&\approx\mathbf {N}\sum_{i=0}^{1}X_{i,t}\pmb\nu(\pmb{s}_t ;\pmb\alpha_i)\Delta t+\left\{\mathbf {N} \diag\{\sum_{i=0}^{1}X_{i,t}\pmb\nu^r(\pmb{s}_t ;\pmb\alpha_i)\}\mathbf {N}^\top\right\}^{\frac{1}{2}}\Delta\pmb{B}_t\\\nonumber
    &\sim \mathcal{N}\left\{\mathbf {N}\sum_{i=0}^{1}X_{i,t}\pmb\nu(\pmb{s}_t ;\pmb\alpha_i)\Delta t, \mathbf {N} \diag\{\sum_{i=0}^{1}X_{i,t}\pmb\nu^r(\pmb{s}_t ;\pmb\alpha_i)\}\mathbf {N}^\top \diag\{\Delta t\}\right\},
\end{align}
where $\Delta\pmb{B}_t$ is a Gaussian random vector with mean zero and covariance matrix $\diag\{\Delta t\}$. 
Finally, we approximate conditional distribution of $\pmb{s}_{t+1}$ as
\begin{align}\label{eq: transition probability}
(\pmb{s}_{t+1}-
\pmb {s}_{t})|\pmb {s}_{t},\pmb{\theta}
\sim   \mathcal{N}\{\pmb{\mu}(\pmb\theta), \Sigma(\pmb\theta)\}, 
\end{align}
where $\pmb{\mu}(\pmb\theta)=\mathbf {N}\sum_{i=0}^{1}X_{i,t}\pmb\nu(\pmb{s}_t ;\pmb\alpha_i)\Delta t$, $ \Sigma(\pmb\theta)= \mathbf {N}\diag\{\sum_{i=0}^{1}X_{i,t}\pmb\nu^r(\pmb{s}_t ;\pmb\alpha_i)\}\mathbf {N}^\top \diag\{\Delta t\}\}$. 

\section{Digital Twin Calibration for Bio-SoS Mechanistic Model}
\label{sec:calibration}

To facilitate digital twin calibration, in this paper, we consider the physical system as a finite horizon stochastic process. Its dynamics can be characterized by a Bio-SoS mechanistic model specified by Eq.~\eqref{eq: cell growth model} and \eqref{eq: metabolic network model} with underlying true parameters, denoted by $\pmb{\theta}^c=[\pmb\alpha_0^c,\pmb\alpha_1^c,\pmb\beta^c]$.
Our goal is to develop a calibration method that can efficiently guide the DoEs and most informative data collection to improve model fidelity and the prediction accuracy of the digital twin.
For simplification, we consider batch-based cell culture experiments with the selection of initial concentration $\pmb{s}_0$ as the decision variable.

The trajectory $\tau=(\pmb{s}_0,\pmb{s}_1,\ldots, \pmb{s}_T)$ over a horizon of $T$ time steps has the joint probability density function 
     $p(\pmb{\tau}|\pmb{\theta},\pmb{s}_0) = \prod_{t=0}^{T-1}p(\pmb{s}_{t+1}|\pmb{s}_t;\pmb{\theta})$ with $\pmb{s}_0=\pi(\pmb\omega)$,
the calibration policy parameterized by $\pmb\omega$. Each transition probability $p(\pmb{s}_{t+1}|\pmb{s}_t;\pmb{\theta})$ follows a normal distribution as defined in Eq.~\eqref{eq: transition probability}. 
At each $k$-th calibration iteration, we select one design following the latest policy, i.e. $\pmb s_{0}^{(k)} = \pi(\pmb\omega_k)$,
run an experiment to collect one sample path from the physical system  and update the parameter $\widehat{\pmb{\theta}}_k$. Specifically, given the historical samples $\mathcal{D}_k=\{\pmb\tau_n\}_{n=1}^{k}$,
we consider maximizing the log-likelihood for $\pmb{\theta}$, 
\begin{equation}\label{MLE}
\widehat{\pmb{\theta}}_k \equiv \argmax_{\pmb{\theta}} L(\mathcal{D}_k; \pmb{\theta})= \argmax_{\pmb\theta}\sum_{n=1}^{k} \ell\left(\pmb\tau_n|\pmb\theta,\pmb{s}_{0}^{(n)}\right),
\end{equation}
where $L(\mathcal{D}_k,\pmb{\theta})$ denotes the sum of log-likelihoods and $\ell(\tau_n|\pmb\theta,\pmb{s}_0^{(n)})=\log p(\pmb{\tau}_n|\pmb{\theta},\pmb{s}_0^{(n)})$ is the log-likelihood of trajectory $\tau_n$ with initial state $\pmb{s}_0^{(n)}=\pi(\pmb{\omega}_n)$. The detailed derivation can be found in Section \ref{sec:MLE}.

For digital twin calibration, we select the new design of experiment 
at the beginning of $k$-th iteration. This selection is based on the evaluation of our current model on a  
set of prediction points $\mathcal{D}_{test}=\{\pmb s_0^h\}_{h=1}^H$ {sampled from a pre-specified distribution.}  
The objective is to minimize the MSE of digital twin prediction, 
\begin{equation}\label{eq: policy optimization}
\pmb\omega^\star_k  \triangleq \argmin_{\pmb\omega} J_k(\pmb\omega) \quad 
\mbox{with}
\quad
J_k(\pmb\omega)  \equiv\sum_{h=1}^H\mathbb{E}\left[
    Y^{\text{p}}\left(\pmb s^{h}_{0},\pmb{\theta}^c\right) - Y^{\text{d}}\left(\pmb s^{h}_{0},{\pmb{u}}_{k}(\widehat{\pmb\theta}_{k-1},\pmb\omega)\right) 
    \right]^2,
\end{equation}
where $\pmb{u}_k(\widehat{\pmb\theta}_{k-1},\pmb\omega)$ denote the virtual model update function with the simulated trajectory $\pmb\tau^{\text{d}}_k$ (see Eq.~\eqref{eq: virtual}), $Y=\phi(\pmb{s}_T)$ represents the desired system output, such as yield and product quality attributes. The superscript "p" represents for physical system and "d" for digital twin. 

To sample efficiently guide digital twin calibration, we propose a gradient-based optimal learning approach with the procedure illustration 
as shown in Figure~\ref{fig:proposed calibration approach} and develop a calibration algorithm, {which is summarized as Algorithm~\ref{alg:cap}.}
In specific, for optimization problem \eqref{MLE}, we use gradient-based method  
 \begin{equation}
\widehat{\pmb\theta}_{k}\leftarrow \widehat{\pmb\theta}_{k-1}+\text{Adam}_k(\nabla_{\pmb{\theta}} \ell(\pmb\tau_k; \widehat{\pmb{\theta}}_{k-1},\pmb{s}_0^{(k)})),     
 \end{equation} where ``Adam'' represents Adam optimizer \citep{Kingma2014AdamAM}.
The details of the optimization steps will be discussed in Section \ref{sec:MLE}. Now we can define  virtual model update function
\begin{equation}\label{eq: virtual}
\pmb{u}_k(\pmb\theta,\pmb\omega)=\pmb\theta + \text{Adam}_k\left(\nabla_{\pmb\theta}\ell(\pmb\tau_{k}^{\text{d}};\pmb\theta,\pmb\omega)\right)   
\end{equation} with a simulated trajectory $\pmb\tau^{\text{d}}_k \sim p(\tau|\widehat{\pmb\theta}_{k-1},\pmb{s}_0)$ and $\pmb{s}_0=\pi(\pmb\omega)$.
Then to guide informative experiment and improve digital twin prediction, we solve the optimization~\eqref{eq: policy optimization} by using the stochastic gradient descent
\begin{equation}\label{gra_update}
    \pmb\omega_k=\pmb\omega_{k-1}-\gamma_k\nabla_{\pmb\omega} J_k(\pmb\omega)\mid_{\pmb\omega=\pmb\omega_{k-1}}.
\end{equation}
\begin{figure}[H]
        \centering
        \includegraphics[width=0.8\linewidth]{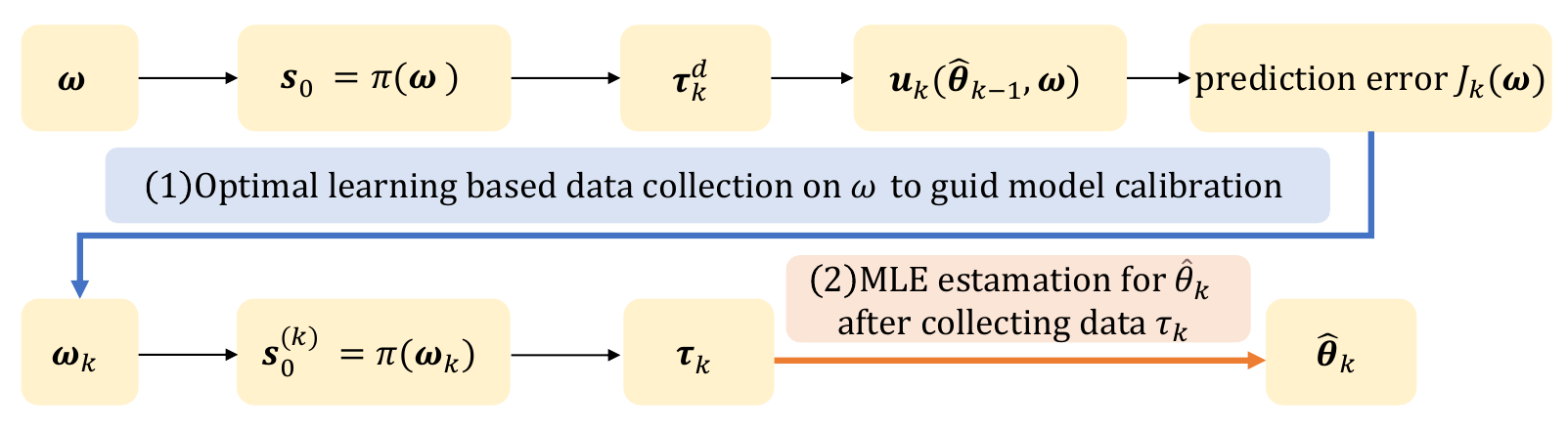} 
        \vspace{-0.05in}
        \caption{The procedure illustration of the proposed calibration approach.}
        \label{fig:proposed calibration approach} \vspace{-0.1in}
    \end{figure}   



Inspired by Figure~\ref{fig:proposed calibration approach}, we compute the gradient with respect to $\pmb{\omega}$ using the chain rule. Specifically, we take the gradient with respect to $ \pmb{u}_k$, and then take the gradient to $ \pmb{u}_k$ with respect to $\pmb{\omega}$, i.e.,
\begin{equation}
    \pmb\omega_k
=\pmb\omega_{k-1}-\gamma_k\nabla_{\pmb{u}_k} \sum_{h=1}^H\mathbb{E}\left[
    Y^{\text{p}}\left(\pmb s^{h}_{0},\pmb{\theta}^c\right) - Y^{\text{d}}\left(\pmb s^{h}_{0},\pmb{u}_k\right) 
    \right]^2 \Bigm|_{\pmb u_k=(\widehat{\pmb\theta}_{k-1},\pmb\omega_{k-1})}\times\nabla_{\pmb{\omega}} \pmb{u}_k\left(\widehat{\pmb\theta}_{k-1},\pmb\omega\right)\Bigm|_{\pmb\omega=\pmb\omega_{k-1}}.
\end{equation}
Since $\pmb\theta^c$ is unknown, we use the bootstrapping to quantify the model parameter estimation uncertainty
in Section \ref{sec:optimal learning}.
Following the upper part of Figure~\ref{fig:proposed calibration approach}, the prediction error propagates from $\pmb{\omega}$ to model parameter update $\pmb{u_k}$, and  then to our objective $J_k(\pmb\omega)$.  By taking the gradient of $J_k(\pmb\omega)$, we can discern the influence of $\pmb{\omega}$ 
on the final prediction error, which assists in selecting a more effective design. Since deriving an exact expression for this gradient is complex, we develop an approximate closed-form representation, which will be detailed in Section~\ref{sec:optimal learning}. This approximation provides an efficient method for improving calibration policy.

\section{Model Inference and Parameter Update}
\label{sec:MLE}
The gradient of the log-likelihood is calculated as
\begin{align}\nonumber
    \nabla_{\pmb{\theta}} L(\mathcal{D}_k; \pmb{\theta})
     =\nabla_{\pmb{\theta}}\left[ \sum_{n=1}^{k}\log  p(\pmb{\tau}_k|\pmb{\theta},\pmb s_0^{(n)})\right]
    =\sum_{n=1}^{k}\sum_{t=0}^{T-1}\left[ \nabla_{\pmb{\theta}}\log  p(\pmb{s}^{(n)}_{t+1}|\pmb{s}^{(n)}_t ,\pmb{\theta})\right].
\end{align} Each term \(\log p(\pmb{s}_{t+1}^{(n)} | \pmb{s}_t^{(n)}, \pmb{\theta})\) in this summation involves gradients that are detailed in the following lemma.
\begin{lemma}
Conditioning on the current state $\pmb {s}_t$ and model parameter $\pmb\theta$, the state change during in any small time interval $(t, t + \Delta t]$ is $\Delta \pmb{s}_{t+1}=\pmb{s}_{t+1}-\pmb {s}_t$ following a multi-variate normal distribution $\mathcal{N}\{\pmb{\mu}(\pmb\theta), \Sigma(\pmb\theta)\}$. Then the gradient of log-likelihood $\log p(\pmb{s}_{t+1} | \pmb{s}_t, \pmb{\theta})$ with respect to $\pmb{\theta}$ is given by
\begin{align*}
 \nabla_{\pmb\theta} \log p(\pmb{s}_{t+1}| \pmb {s}_t,\pmb\theta)=&-\frac{1}{2}\text{tr}\left(\Sigma(\pmb\theta)^{-1} \nabla_\theta\Sigma(\pmb\theta)\right)+\nabla_\theta{\pmb\mu}(\pmb\theta)^\top \Sigma(\pmb\theta)^{-1} (\pmb{s}_{t+1}-\pmb {s}_t - \pmb\mu(\pmb\theta))\\  &-\frac{1}{2} (\pmb{s}_{t+1}-\pmb {s}_t - \pmb\mu(\pmb\theta))^\top \left(-\Sigma(\pmb\theta)^{-1} {\nabla_{\pmb\theta}\Sigma(\pmb\theta)}\Sigma(\pmb\theta)^{-1}\right)(\pmb{s}_{t+1}-\pmb {s}_t - \pmb\mu(\pmb\theta)) . 
\end{align*}    
\end{lemma}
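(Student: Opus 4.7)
The plan is to write out the Gaussian log-density explicitly, then differentiate it term by term with respect to $\pmb\theta$, applying the standard matrix-calculus identities for the log-determinant and the inverse of a matrix-valued function. Since the result \eqref{eq: transition probability} in Section~\ref{subsec: approximated model} already gives us that $\Delta\pmb{s}_{t+1}\mid\pmb s_t,\pmb\theta \sim \mathcal{N}\{\pmb\mu(\pmb\theta),\Sigma(\pmb\theta)\}$, the starting point is
\begin{equation*}
\log p(\pmb s_{t+1}\mid\pmb s_t,\pmb\theta) = -\tfrac{d}{2}\log(2\pi) - \tfrac{1}{2}\log|\Sigma(\pmb\theta)| - \tfrac{1}{2}(\Delta\pmb s_{t+1}-\pmb\mu(\pmb\theta))^\top\Sigma(\pmb\theta)^{-1}(\Delta\pmb s_{t+1}-\pmb\mu(\pmb\theta)),
\end{equation*}
where $d$ is the dimension of the state. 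The constant term drops out under $\nabla_{\pmb\theta}$, so only two nontrivial pieces remain.

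For the log-determinant piece, I would invoke Jacobi's formula in the form $\nabla_{\pmb\theta}\log|\Sigma(\pmb\theta)| = \mathrm{tr}(\Sigma(\pmb\theta)^{-1}\nabla_{\pmb\theta}\Sigma(\pmb\theta))$, which immediately yields the first summand $-\tfrac12\,\mathrm{tr}(\Sigma^{-1}\nabla_{\pmb\theta}\Sigma)$ in the claim. For the quadratic form, both $\pmb\mu$ and $\Sigma^{-1}$ depend on $\pmb\theta$, so I would apply the product rule and handle the two contributions separately. Differentiating through $\pmb\mu$ uses the identity $\nabla_{\pmb\theta}[(\pmb a-\pmb\mu)^\top M(\pmb a-\pmb\mu)] = -2\,\nabla_{\pmb\theta}\pmb\mu^\top M(\pmb a-\pmb\mu)$ when $M$ is held fixed and symmetric, producing the linear-in-residual term $\nabla_{\pmb\theta}\pmb\mu^\top\Sigma^{-1}(\Delta\pmb s_{t+1}-\pmb\mu)$. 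Differentiating through $\Sigma^{-1}$ with the residual held fixed uses $\nabla_{\pmb\theta}\Sigma^{-1} = -\Sigma^{-1}(\nabla_{\pmb\theta}\Sigma)\Sigma^{-1}$, which produces the final quadratic term in the stated expression.

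Assembling the three contributions and carrying the leading $-\tfrac12$ factor through the quadratic part reproduces the formula in the lemma exactly. The calculation is standard matrix calculus; the only real bookkeeping obstacle is keeping the chain-rule signs straight in $\nabla_{\pmb\theta}\Sigma^{-1}$ and remembering that the derivative of the quadratic form with respect to $\pmb\mu$ contributes a factor of $2$ that cancels the outer $-\tfrac12$ to give a clean linear residual term. Because $\Sigma(\pmb\theta)$ arising from \eqref{eq_fccccc} is symmetric and (generically) positive definite under the LNA, all the inverses and square roots used in these identities are well defined, and no additional regularity assumption is required beyond what is already in force in Section~\ref{subsec: approximated model}.
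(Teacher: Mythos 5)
Your proposal is correct and follows essentially the same route as the paper's proof: write out the Gaussian log-density, apply Jacobi's formula $\nabla_{\pmb\theta}\log|\Sigma(\pmb\theta)| = \mathrm{tr}\left(\Sigma(\pmb\theta)^{-1}\nabla_{\pmb\theta}\Sigma(\pmb\theta)\right)$ to the log-determinant term, and use the product rule together with $\nabla_{\pmb\theta}\Sigma(\pmb\theta)^{-1} = -\Sigma(\pmb\theta)^{-1}\left(\nabla_{\pmb\theta}\Sigma(\pmb\theta)\right)\Sigma(\pmb\theta)^{-1}$ on the quadratic form. The only difference is cosmetic (you note explicitly that the factor of $2$ from the $\pmb\mu$-derivative cancels the outer $-\tfrac12$, which the paper leaves implicit), so no further comment is needed.
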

\begin{proof}
By the definition of multi-variant normal distribution, we have
\[
p(\pmb{s}_{t+1}| \pmb{s}_t,\pmb\theta) = \frac{1}{(2\pi)^{M/2}|\Sigma(\pmb\theta)|^{1/2}} \exp\left(-\frac{1}{2}(\pmb{s}_{t+1}-\pmb {s}_t-\pmb\mu(\pmb\theta))^\top\Sigma(\pmb\theta)^{-1}(\pmb{s}_{t+1}-\pmb {s}_t-\pmb\mu(\pmb\theta))\right).
\]
Taking the logarithm on both side, we have
\[
\log p(\pmb{s}_{t+1}| \pmb {s}_t,\pmb\theta)  = -\frac{1}{2} \log |\Sigma(\pmb\theta)| - \frac{1}{2}  (\pmb{s}_{t+1}-\pmb {s}_t - \pmb\mu(\pmb\theta))^\top \Sigma(\pmb\theta)^{-1} (\pmb{s}_{t+1}-\pmb {s}_t - \pmb\mu(\pmb\theta))
-\frac{M}{2} \log (2\pi).\]
The gradient for the first term is
$\nabla_{\pmb\theta} \log |\Sigma(\pmb\theta)| = \text{tr}\left(\Sigma(\pmb\theta)^{-1} \nabla_{\pmb\theta}\Sigma(\pmb\theta)\right), $
where $\nabla_{\pmb\theta}\Sigma(\pmb\theta)$ is a tensor of $M\times M\times q$ dimension with $q$ as the length of $\pmb\theta$, and tr$(\cdot)$ represents the trace for first two dimensions. The gradient for the second term is
\begin{align*}
 {\nabla_{\pmb\theta}}(\pmb{s}_{t+1}-\pmb {s}_t - \pmb\mu(\pmb\theta))^\top \Sigma(\pmb\theta)^{-1} (\pmb{s}_{t+1} - \pmb{s}_t - \pmb\mu(\pmb\theta)) = -2 \nabla_{\pmb\theta}{\pmb\mu}(\pmb\theta)^\top \Sigma(\pmb\theta)^{-1} (\pmb{s}_{t+1}-\pmb {s}_t - \pmb\mu(\pmb\theta)) \\
 + (\pmb{s}_{t+1}-\pmb {s}_t - \pmb\mu(\pmb\theta))^\top \left(-\Sigma(\pmb\theta)^{-1} {\nabla_{\pmb\theta}\Sigma(\pmb\theta)}\Sigma(\pmb\theta)^{-1}\right)(\pmb{s}_{t+1}-\pmb {s}_t - \pmb\mu(\pmb\theta)).\tag*{\qedhere}
\end{align*}
\end{proof}


 Given the calculated gradient, we can use the Adam optimizer \citep{Kingma2014AdamAM} for updating $\widehat{\pmb\theta}$. Following the algorithm in their paper, for the $k$-th iteration, we have

\begin{equation}\label{eq: param_update}
\widehat{\pmb\theta}_k = \widehat{\pmb\theta}_{k-1}+\lambda \cdot \frac{\left(1-\xi_1\right)\sum_{l=1}^{k}\xi_1 ^
{k-l}G_l}{\left(1-\xi_1^k\right)\sqrt{\frac{\left(1-\xi_2\right)\sum_{l=1}^{k}\xi_2 ^
{k-l}G_l^2}{\left(1-\xi_2^k\right)}+\epsilon}},    
\end{equation}
where $\lambda$ is the learning rate, $G_l=\nabla_{\pmb{\theta}} \ell(\pmb\tau_l; \widehat{\pmb{\theta}}_{l-1},\pmb{s}_0^{(l)})$ is the gradient of the likelihood function at $l$-th iteration, $g_l^2$ is the element-wise
square, $\xi_1=0.9, \xi_2=0.999$ are hyper-parameters related to  exponential decay rates, and $\epsilon=10^{-8}$ is a small number to avoid zero in the denominator.

\section{Policy Gradient Estimation and Optimal Learning}
\label{sec:optimal learning}

In this section, we turn our focus on learning the calibration policy by minimizing the prediction MSE in \eqref{eq: policy optimization}. To develop a surrogate model of this MSE, we employ the {LNA}  in Section~\ref{sec: LNA} to approximate the stochastic dynamics of the system, reflecting its sensitivity to initial conditions and parameter values. This approach allows us to describe the system state at any given time as a normally distributed random variable, facilitating the derivation of analytical expressions for the gradient estimator. To keep the mechanism information, in Section \ref{ode}, a first-order Euler's Method is utilized to derive the closed-form solutions of the SDE-based mechanisms 
obtained from {LNA}.

\subsection{Linear Noise Approximation on Bio-SoS Dynamics}
\label{sec: LNA}
We use {LNA} \citep{Paul2014} to derive the solution of SDEs in {Lemma~\ref{lemma:LNA}}, which allows us to analyze the estimation error propagation from ${\pmb{\theta}}$ to the output prediction and develop a surrogate model of the MSE \eqref{eq: policy optimization}. Conditional on $\pmb s_0$ and $\pmb\theta$, we denote the state of systems at time $t$ as $\pmb s_t(\pmb\theta)$. 
Considering the SDEs characterizing the Bio-SoS dynamics as shown in  \eqref{eq_fccccc}, we have
\begin{align}\label{SDEsys}
    \mathrm{d} \pmb{s}_t(\pmb{\theta})&=\pmb{b}(\pmb{s}_t(\pmb{\theta}))\text{d}t+ \mathbf{Q}(\pmb{s}_t(\pmb{\theta}))\text{d}\pmb{B}_t,
\end{align}
where
$\pmb{b}=\mathbf{N}\sum_{i=0}^{1}X_{i,t+\text{d}t}\pmb\nu(\pmb{s}_t ;\pmb\alpha_i)$ and $
    \mathbf{Q}=\left\{\mathbf{N}\diag\{\sum_{i=0}^{1}X_{i,t+\text{d}t}\pmb\nu^k(\pmb{s}_t ;\pmb\alpha_i)\}\mathbf{N}^\top\right\}^{\frac{1}{2}}.$\\

\begin{lemma}\label{lemma:LNA}
(Linear Noise Approximation) Suppose $\pmb{s}_0 \sim N\left(\boldsymbol{\mu}_0^*, \boldsymbol{\Sigma}_0^*\right)$,  $\boldsymbol{\eta}_0=\boldsymbol{\mu}_0^*$ and $\boldsymbol{\Psi}_0=\boldsymbol{\Sigma}_0^*$.
Then the solution of SDE \eqref{SDEsys} can be approximated as $$
\pmb{s}_t(\pmb\theta) \sim N\left(\boldsymbol{\eta}_t, \boldsymbol{\Psi}_t\right)
,$$
where $\boldsymbol{\eta}_t, \boldsymbol{\Psi}_t$ are the solution of ordinary differential equations (ODEs)
\begin{equation}\label{ode1}
    \frac{\text{d}\boldsymbol{\eta}}{\text{d}t}=\pmb{b}(\boldsymbol{\eta}),\quad \frac{\mathrm{d} \boldsymbol{\Psi}}{\mathrm{d} t}  =\boldsymbol{\Psi} \mathbf{F}^\top+\mathbf{F} \boldsymbol{\Psi}+\mathbf{Q Q}^\top,
\end{equation}
where we assume $\pmb{s}_t=\boldsymbol{\eta}_t+\mathbf{M}_t$
with $\boldsymbol{\eta}_t$ representing the deterministic path and $\mathbf{M}_t$ representing stochastic path, $\boldsymbol{\Psi}_t\triangleq\operatorname{Var}[\mathbf{M}_t], $ $\mathbf{F}_t$ is the $M \times M$ matrix with components
$$
\mathbf{F}_{t,i j}=\left.\frac{\partial \pmb{b}_i}{\partial \pmb s_{t,j}}\right|_{\boldsymbol{\eta}_t} \text {   and   } \mathbf{Q}_t=\mathbf{Q}(\boldsymbol{\eta}_t) .
$$
\end{lemma}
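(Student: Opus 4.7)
The plan is to derive the Linear Noise Approximation by the standard van Kampen-style ansatz: decompose the state into a deterministic mean trajectory plus a small Gaussian fluctuation, Taylor-expand the SDE coefficients around the mean, and read off ODEs for the mean and covariance. I would write $\pmb{s}_t = \pmb{\eta}_t + \mathbf{M}_t$, with $\pmb{\eta}_t$ deterministic and $\E[\mathbf{M}_t] = \mathbf{0}$, and substitute into \eqref{SDEsys}, treating $\mathbf{M}_t$ as the small quantity in which we expand.

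A first-order Taylor expansion of the drift gives $\pmb{b}(\pmb{s}_t) \approx \pmb{b}(\pmb{\eta}_t) + \mathbf{F}_t \mathbf{M}_t$, with $\mathbf{F}_{t,ij} = \partial \pmb{b}_i/\partial \pmb{s}_{t,j}\big|_{\pmb{\eta}_t}$, and a zeroth-order expansion of the diffusion gives $\mathbf{Q}(\pmb{s}_t) \approx \mathbf{Q}(\pmb{\eta}_t) = \mathbf{Q}_t$ (higher-order terms would multiply $d\pmb{B}_t$ and are subdominant). Substituting and matching terms of the same order in $\mathbf{M}_t$ yields
\[
d\pmb{\eta}_t = \pmb{b}(\pmb{\eta}_t)\,dt, \qquad d\mathbf{M}_t = \mathbf{F}_t \mathbf{M}_t\,dt + \mathbf{Q}_t\,d\pmb{B}_t,
\]
with initial conditions $\pmb{\eta}_0 = \pmb{\mu}_0^*$ and $\mathbf{M}_0 \sim N(\mathbf{0}, \pmb{\Sigma}_0^*)$. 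The first identity is exactly the ODE for $\pmb{\eta}$ in \eqref{ode1}. The second is a linear SDE with additive Brownian forcing and a Gaussian initial condition, so its solution $\mathbf{M}_t$ is Gaussian for all $t$; hence $\pmb{s}_t = \pmb{\eta}_t + \mathbf{M}_t \sim N(\pmb{\eta}_t, \pmb{\Psi}_t)$ with $\pmb{\Psi}_t \triangleq \Var[\mathbf{M}_t]$.

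To obtain the Lyapunov-type ODE for $\pmb{\Psi}_t$, I would apply Itô's formula to $\mathbf{M}_t \mathbf{M}_t^\top$. The drift piece contributes $(\mathbf{F}_t \mathbf{M}_t)\mathbf{M}_t^\top + \mathbf{M}_t(\mathbf{F}_t \mathbf{M}_t)^\top$, and the quadratic-variation term contributes $\mathbf{Q}_t \mathbf{Q}_t^\top\, dt$. Taking expectations eliminates the Brownian stochastic integral, and using $\E[\mathbf{M}_t \mathbf{M}_t^\top] = \pmb{\Psi}_t$ (since $\mathbf{M}_t$ is mean-zero) gives
\[
\frac{d \pmb{\Psi}_t}{dt} = \mathbf{F}_t \pmb{\Psi}_t + \pmb{\Psi}_t \mathbf{F}_t^\top + \mathbf{Q}_t \mathbf{Q}_t^\top,
\]
with $\pmb{\Psi}_0 = \pmb{\Sigma}_0^*$, matching the second ODE in \eqref{ode1}.

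The main obstacle is that this is a \emph{formal} first-order expansion: rigorous justification requires a small parameter (typically an inverse system size, here realized by the large cell population in the bioreactor) that guarantees the fluctuations $\mathbf{M}_t$ remain $O(\Omega^{-1/2})$ over the horizon of interest, so that truncating the Taylor series and replacing $\mathbf{Q}(\pmb{s}_t)$ by $\mathbf{Q}(\pmb{\eta}_t)$ introduces only higher-order error. Pinning down uniform bounds on this error over $[0, T]$ would be the delicate technical step; at the leading-order level consistent with how the lemma is stated, I would stop at the formal derivation above, as is standard in the LNA literature.
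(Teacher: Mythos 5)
Your derivation is correct and is precisely the standard argument behind the lemma: the paper itself offers no derivation, simply deferring to the cited LNA reference (with $\mathbf{m}_t \triangleq \mathbb{E}[\mathbf{M}_t]=0$), and that reference's proof is exactly the van Kampen decomposition, first-order Taylor expansion of the drift, frozen diffusion, and It\^o/Lyapunov computation for $\boldsymbol{\Psi}_t$ that you spell out. Your closing caveat about the expansion being formal absent a system-size parameter is also apt and consistent with how the lemma is stated.
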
 
\begin{proof}
Proof can be find in \cite{Paul2014} with $\mathbf{m}_t\triangleq\mathbb{E}[\mathbf{M}_t]=0$.    
\end{proof}




Obtaining an analytical solution for the series of ODEs described in \eqref{ode1} is challenging. Therefore, we use a numerical solution approach, as described in Section \ref{ode}. This method involves applying the solution iteratively at each time step $t$, allowing us to approximate the system's behavior over time. For the physical system and digital twin, we denote 
$
\pmb{s}_t(\pmb{\theta}^c) \sim N\left(\boldsymbol{\eta}_t,\boldsymbol{\Psi}_t\right)$ and $
\pmb{s}_t({\pmb{u}}_k) \sim N\left(\widehat{\boldsymbol{\eta}}_t, \widehat{\boldsymbol{\Psi}}_t\right).
$
Then we have
$
\pmb{s}_T(\pmb\theta^c)-\mathbf{s}_T({\pmb{u}}_k)\sim N\left(\boldsymbol{\eta}_T-\widehat{\boldsymbol{\eta}}_T, \boldsymbol{\Psi}_T+\widehat{\boldsymbol{\Psi}}_T\right).
$ For any $m$-th component $Y$, the surrogate model of the MSE \eqref{eq: policy optimization} is given by
\begin{align}
    \E\left[\left(Y^{\text{p}}(\pmb{\theta}^c)-Y^{\text{d}}({\pmb{u}}_k)\right)^2\right]
    =\left[\boldsymbol{\eta}_{T,m}-\widehat{\boldsymbol{\eta}}_{T,m}\right]^2+\boldsymbol{\Psi}_{T,mm}+\widehat{\boldsymbol{\Psi}}_{T,mm}.\label{eq_ex_de} 
\end{align}
\subsection{First-Order Euler's Method for Solving LNA}\label{ode}

To get the solution of $\pmb{s}_t$, we solve the ODEs outlined in \eqref{ode1} numerically by using first-order Euler's method. 
For the term $\boldsymbol{\eta}_t$, the update equation is given by
\begin{align}\label{eq_de_1}
\boldsymbol{\eta_T}&=\boldsymbol{\eta}_{T-1}+b(\boldsymbol{\eta_{T-1}})\Delta t=\boldsymbol{\eta}_{0}+\sum_{t=0}^{T-1}b(\boldsymbol{\eta_{t}})\Delta t=\boldsymbol{\eta}_{0}+\sum_{t=0}^{T-1}\sum_{i=0}^{1}X_{i,t}\mathbf{N}\pmb\nu(\pmb{s}_t ;\pmb\alpha_i)\Delta t\triangleq\boldsymbol{\eta}_{0}+\sum_{t=0}^{T-1}f_t{(\pmb{\theta})}.
\end{align}
For the term $\boldsymbol{\Psi}_{t}$, the update equation is given by
\begin{align}\nonumber
   \boldsymbol{\Psi}_T &=\boldsymbol{\Psi}_{T-1}+\left[\boldsymbol{\Psi}_{0} \mathbf{F}_{T-1}^\top+\mathbf{F}_{T-1} \boldsymbol{\Psi}_0+\mathbf{Q_{T-1} Q_{T-1}}^\top\right]\Delta t+O((\Delta t)^2)\\\nonumber
&=\boldsymbol{\Psi}_{0}+\sum_{t=0}^{T-1}\left[\boldsymbol{\Psi}_{0} \mathbf{F}_{t}^\top+\mathbf{F}_{t} \boldsymbol{\Psi}_0+\mathbf{Q_{t} Q_{t}}^\top\right]\Delta t+O((\Delta t)^2)\\\nonumber
   &=\boldsymbol{\Psi}_{0}
+\boldsymbol{\Psi}_{0} \left[\sum_{t=0}^{T-1}\sum_{i=0}^{1}\left[\mathbf{N}\pmb\nu(\pmb{s}_t ;\pmb\alpha_i)\nabla_{\pmb{s}_t } X_{i,t+\text{d}t}\right]\right]^\top\Delta t+\left[\sum_{t=0}^{T-1}\sum_{i=0}^{1}\left[\mathbf{N}\pmb\nu(\pmb{s}_t ;\pmb\alpha_i)\nabla_{\pmb{s}_t } X_{i,t+\text{d}t}\right]\right] \boldsymbol{\Psi}_0\Delta t\\\nonumber
   &~~~~+\boldsymbol{\Psi}_{0} \left[\sum_{t=0}^{T-1}\sum_{i=0}^{1}\left[ X_{i,t+\text{d}t}\mathbf{N}\nabla_{\pmb{s}_t }\pmb\nu(\pmb{s}_t ;\pmb\alpha_i)\right]\right]^\top\Delta t+\left[\sum_{t=0}^{T-1}\sum_{i=0}^{1}\left[X_{i,t+\text{d}t}\mathbf{N}\nabla_{\pmb{s}_t }\pmb\nu(\pmb{s}_t ;\pmb\alpha_i)\right]\right] \boldsymbol{\Psi}_0\Delta t\\\nonumber
   &~~~~+\sum_{t=0}^{T-1}\left[\mathbf{N}\diag\{\sum_{i=0}^{1}X_{i,t+\text{d}t}\pmb\nu^r(\pmb{s}_t ;\pmb\alpha_i)\}\mathbf{N}^\top\right]\Delta t
    +O((\Delta t)^2)
    \\\label{eq_de_3} 
   & \triangleq\boldsymbol{\Psi}_{0}+\sum_{t=0}^{T-1}g_t(\pmb\theta)+O((\Delta t)^2).
\end{align}
Then by \eqref{eq_ex_de}-\eqref{eq_de_3}, the final output discrepancy can be written as 
\begin{align}\nonumber
    ~&\E\left[\left(Y^{\text{p}}(\pmb{\theta}^c)-Y^{\text{d}}({\pmb{u}}_k)\right)^2\right]
    =\left(\sum_{t=0}^{T-1}[f_{t,m}(\pmb\theta^c)-f_{t,m}({\pmb u}_k)]\right)^2+2\Psi_{0,mm}+\sum_{t=0}^{T-1}[g_{t,mm}(\pmb\theta^c)+g_{t,mm}({\pmb u}_k)].
\end{align}
Now conditional on $\widehat{\pmb{\theta}}_{k-1}$, by applying a chain rule, we can update the policy parameters by

\begin{align}\nonumber
\pmb\omega_k
\approx
&\pmb\omega_{k-1}-\gamma_k
\nabla_{{\pmb u}_k}\sum_{h=1}^H\left[\left(\sum_{t=0}^{T-1}[f_{t,m}(\pmb\theta^c)-f_{t,m}({\pmb u}_k)]\right)^2+\sum_{t=0}^{T-1}[g_{t,mm}(\pmb\theta^c)+g_{t,mm}({\pmb u}_k)]\middle]\right|_{\pmb u_k=(\widehat{\pmb\theta}_{k-1},\pmb\omega_{k-1})}
\\\label{policy_update}
&\times\lambda\nabla_{\pmb{\omega}}  \frac{\left(1-\xi_1\right)\sum_{t=1}^{k}\xi_1 ^
{k-t}\Tilde{G}_t}{\left(1-\xi_1^k\right)\sqrt{\frac{\left(1-\xi_2\right)\sum_{t=1}^{k}\xi_2 ^
{k-t}\Tilde{G}_t^2}{\left(1-\xi_2^k\right)}+\epsilon}}\Biggm|_{\pmb\omega=\pmb\omega_{k-1}}
,\end{align}
where $\Tilde{G}_k=\nabla_{\pmb{\theta}} \ell(\pmb\tau_k^{\text{d}}; \widehat{\pmb{\theta}}_{k-1},\pmb{s}_0^{(k)})$  and $\Tilde{G}_t=G_t, \text{for } t=1,\ldots,k-1$. Here $\Tilde{G}_k$ is a gradient based on simulation trajectory $\pmb\tau_k^{\text{d}}$ since  $\pmb\omega_k$ is updated at the beginning of $k$-th iteration and we can not get physical data $\pmb\tau_k$.

For the estimation of $\pmb\theta^c$, we use the parametric bootstrapping. Based on MLE and data $\pmb\tau_1,\ldots,\pmb\tau_{k-1}$, we have $\widehat{\pmb\theta}_{k-1}$ as the estimator for $\pmb\theta^c$. Then we can generate $l$ bootstrap samples,
denoted as 
$\Tilde{\pmb\theta}^1_{k-1}$, $\Tilde{\pmb\theta}^2_{k-1}, \ldots,$ $\Tilde{\pmb\theta}^l_{k-1}$.
Then we can take the estimation
$    f_{t,m}(\pmb\theta^c)\approx\frac{1}{l}\sum_{n=1}^l f_{t,m}(\Tilde{\pmb\theta}_{k-1}^n),\quad g_{t,mm}(\pmb\theta^c)\approx \frac{1}{l}\sum_{n=1}^l g_{t,mm}(\Tilde{\pmb\theta}_{k-1}^n).$
{Finally, we summarize the algorithm as follows.} \\
\begin{algorithm}[H]
\caption{Gradient-based Optimal Learning}\label{alg:cap}
\textbf{Input}:initial policy parameter $\pmb\omega_0$, initial model parameter $\widehat{\pmb\theta}_0$, step size $\lambda$ for parameter update, initial step size $\gamma_0$ for design policy update\\
\textbf{For} $k=1:K$\\
\begin{enumerate}
      \item Compute $\pmb{u}_k(\widehat{\pmb\theta}_{k-1} ,\pmb\omega)$ by Eq.~\eqref{eq: virtual} \\ 
      \item Update design  $\pmb\omega_k$ by Eq.~\eqref{policy_update}
    \item Do experiments in physical system with design $\pi{(\pmb\omega_k)}$ to get new data $\pmb{\tau}_k$,     $\mathcal{D}_{k}= \mathcal{D}_{k-1}\cup\{\pmb{\tau}_k\}$
     \item Update model parameters $\widehat{\pmb\theta}_{k}$ by Eq.~\eqref{eq: param_update}
\end{enumerate}
\vspace{-5mm}
\textbf{End}
\end{algorithm}
\section{Empirical Study}\label{sec: empirical study}
{For validation, we use a simplified version of the Chinese Hamster Ovary (CHO) cell culture model \citep{Ghorbaniaghdam2014} as our \textbf{SMRN} model. We combine it with a state shift model to simulate cell culture dynamics. Our gradient-based policy is compared against a random policy and a GP-based calibration approach. Details of experiment setup are provided in Section \ref{sec_6_1}, with results in Section \ref{sec_6_2}.}

\subsection{Cell Culture Model and Validation}\label{sec_6_1}
CHO cells have become the most commonly used mammalian hosts for the industrial production of monoclonal antibodies (mAbs) and recombinant protein therapeutics. 
The {cell growth model} is described by a function characterizing different phases of cell culture
\begin{align*}
        v_{growth0}&= v_{maxg0}\frac{GLC}{K_{mGLC_0} + GLC} X_{0}, 
        \quad
     \frac{\mathrm{d} X_0}{\text{d}t}=v_{growth0}
     +\beta_1 X_{1}-\beta_0X_{0},\quad
     \frac{\mathrm{d} X_{1}}{\text{d}t}=\beta_0 X_{0}-\beta_1 X_{1},   
\end{align*}
where $v_{growth0}$ represents the growth rate that depends on glucose concentration in the growth phase (phase 0) and where $\beta_0$
  and $\beta_1$ are phase transition rates, reflecting the dynamic interplay between growth and production phases. A simplified {SMRN} focuses on three key metabolisms: glucose consumption, lactate production, and mAb synthesis with chemical equations $(1)~ \text{GLC}\rightarrow 2\text{LAC}$, $(2)~ 0.02\text{GLC}\rightarrow \text{X}$, and 
    $(3)~ 0.01\text{GLC}\rightarrow \text{mAb}$,
where GLC and LAC represent glucose and lactate. The state of the system at any time $t$ is denoted by $\pmb s_t=\left[[GLC]_t,[LAC]_t,[mAb]_t\right]$ representing the concentrations of glucose, lactate, and mAb, respectively. The flux rates $\pmb v(\pmb{s}_t;\pmb\alpha_i)= (-v_{HK},2v_{HK}, v_{mAb})$, where
  \begin{align}\nonumber
    v_{HK_i} &= v_{maxHK_i} \frac{GLC}{K_{mGLC_i} + GLC} X_{i,t}, \quad v_{HK}=v_{HK_0}+v_{HK_1}
     \\\nonumber    
     v_{mAb_i}&= v_{maxmAb_i}\frac{GLC}{K_{mGLC_i} + GLC} X_{i,t}, \quad v_{mAb}=v_{mAb_0}+v_{mAb_1}
\end{align}

The metabolic reactions are formulated as a SDE system~\eqref{SDEsys} to account for the inherent randomness and fluctuations in biochemical reactions. The general form of this SDE is given by
$\mathrm{d} \pmb{s}_t(\pmb{\theta})=\pmb{b}(\pmb{s}_t(\pmb{\theta}))\text{d}t+ \mathbf{Q}(\pmb{s}_t(\pmb{\theta}))\text{d}\pmb{B}_t,$
where the mean function
$\pmb{b}=\mathbf{N}\sum_{i=0}^{1}X_{i,t+\text{d}t}\pmb\nu(\pmb{s}_t ;\pmb\alpha_i)$ is modeled by the stoichiometric matrix $\mathbf{N} = [-1, -0.02, -0.01; ~2, 0, 0;~ 0, 0, 1]
$ and reaction rates $\pmb{v}(\pmb{s}_t;\pmb\alpha_i)$. The system variability is modeled by $
    \mathbf{Q}=\left\{\mathbf{N}\diag\{\sum_{i=0}^{1}X_{i,t+\text{d}t}\pmb\nu^r(\pmb{s}_t ;\pmb\alpha_i)\}\mathbf{N}^\top\right\}^{\frac{1}{2}}$.

{We fit our model by minimizing $\text{MSE}=\frac{1}{n}\sum_{j=1}^{n}{[Y^s_j-Y^e_j]^2}$ between experiment data $Y^e$ from \cite{Ghorbaniaghdam2014}
 and simulation $Y^s$ from our model.} The results of our model validation are depicted in Figure \ref{fig: model fitting}, where the measurements are from \cite{Ghorbaniaghdam2014}. 
 The confidence interval (CI) is calculated by the formula
$[\bar{x} - z {\sigma}/{\sqrt{n}},\bar{x} + z {\sigma}/{\sqrt{n}}]$,
where $\bar{x}$ is the sample mean, $z$ 
 is the z-score corresponding to the 95\% confidence level, which is approximately 1.96, $\sigma$ is the standard deviation of the sample, $n$ is the number of observations in the sample. This result shows the model’s effectiveness in capturing the dynamics of the CHO cell culture system.
 \begin{figure}[htbp]
    \centering
    \includegraphics[width=1\linewidth, height=0.2\linewidth]{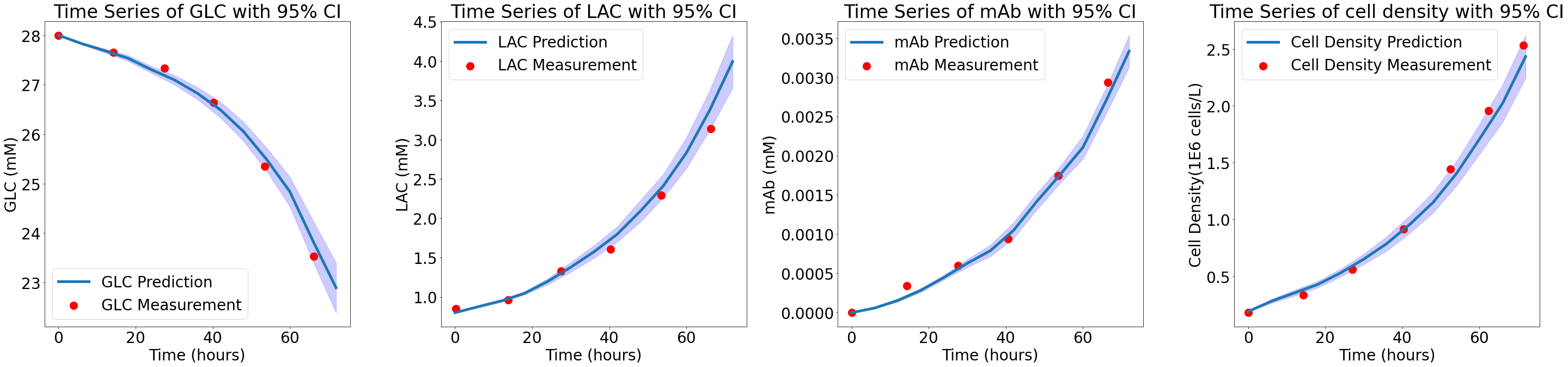}
    \caption{{The cell culture mechanistic model was validated by using the experimental data. Red dots represent the experimental data from the literature, while the blue lines indicate the simulation predictions from our model, with the mean and 95\% CI calculated from 20 replications.}}
    \label{fig: model fitting}
\end{figure}

 In the following calibration experiments, we use the fitted parameter as the true model parameter of the physical system and assume the digital twin has the same model structure with unknown parameters. The true parameters are (1) $\pmb\alpha_0=[v_{maxHK0},v_{maxmAb0},K_{mGLC_0},v_{maxg0}]=[4.15\times10^{-5},3.5\times10^{-8},4.2, 0.126] $; (2) $\pmb\alpha_1=[v_{maxHK1},v_{maxmAb1},K_{mGLC_1}]=[1.95\times10^{-5},7.5\times10^{-8},3.5] $ and (3) $\pmb\beta=[\beta_0,\beta_1]=[0.3,0.1]$.

\subsection{Digital Twin Calibration}\label{sec_6_2}
With the true parameters defined in last subsection,
our objective is to calibrate the digital twin model by strategically designing CHO cell culture experiments. The aim is to minimize the model prediction error by choosing the initial glucose concentration (design variable). In each experiment, we adjust the initial glucose concentration and then collect a new batch of data to continuously refine our model. 

For the calibration experiments setting, we assume time course data of the states is collected every 12h until 72h, so each batch includes 7-time sequence data. We focus on calibrating parameters $\pmb\alpha_0, \pmb\alpha_1$. The parameter {$\pmb\beta$ is not considered during the digital twin calibration because it's directly related to cell density, which is not built into our state.} The design space for the initial concentrations of glucose is set within the range 
$[18,38]$ mM, and the initial concentration of lactate and mAb are $[LAC]_0=0.8$ mM, $[mAb]_0=0$ mM, the initial cell densities in Phase 0 and 1 are set to be $X_{0,0}=180\times10^6$ cells/L, $X_{1,0}=
10\times 10^6$ cells/L, respectively. The initial parameters for the digital twin are estimated by the first batch of data using MLE. Since our parameters are in different scales, we set a learning rate {$\lambda=[3\times 10^{-5},3\times 10^{-7},3,0.3,3\times 10^{-5},3\times 10^{-7},0.3]$}  for updating the parameter and $\gamma_k =5$ (for all $k$) for updating next design policy. The fixed set of test data $\mathcal{D}_{test}$ is composed of initial states $\pmb s_0^1=[28,0.8,0]$ and $\pmb s_0^2=[38,0.8,0]$.

We conducted 15 macro-replications, each consisting of $K=200$ iterations under varying initial conditions, with initial glucose levels randomly chosen from the uniform distribution over the interval 
$[18,38]$.  We compare the proposed gradient-based calibration approach with (1) a random design approach, where the design is uniformly sampled from $[8, 38]$; {(2) GP-based approach with 
 expected improvement acquisition function and 20 initial points. A Matern Kernal $K_{\text{Matern}}\left(x, x^{\prime}\right)=\frac{2^{1-\nu}}{\Gamma(\nu)}\left(\frac{\sqrt{2 \nu}|d|}{\ell}\right)^\nu K_\nu\left(\frac{\sqrt{2 \nu}|d|}{\ell}\right)$ is used with $d=x-x'$, $\ell=1$ and $\nu=1.5$. The GP regression is fitted using the MSE as the output and parameters as input. The GP model is iteratively updated with data from new physical experiments.} To ensure consistency across the three approaches, we used the same random seeds for these three approaches. 
The mean prediction performance and the 95\% confidence interval, calculated across macro-replications, represented by the Mean Relative Errors $\text{MRE}=\frac{1}{7}\sum_{j=1}^{7}{|\widehat\theta_{(j)}-\theta_{(j)}^c|}/{|\theta_{(j)}^c|}$ with script $(j)$ representing the $j$-th component, are presented in Figure~\ref{fig:pre_1} for mAb protein drug generation prediction and in Figure \ref{fig:pra_1} for parameter estimation, respectively. For all figures, red lines represent the performance of our approach, green lines represent the random design and {blue lines represent GP.}


\begin{figure}[htbp]
    \centering
    \begin{minipage}{0.45\linewidth}
        \centering        \includegraphics[width=\linewidth]{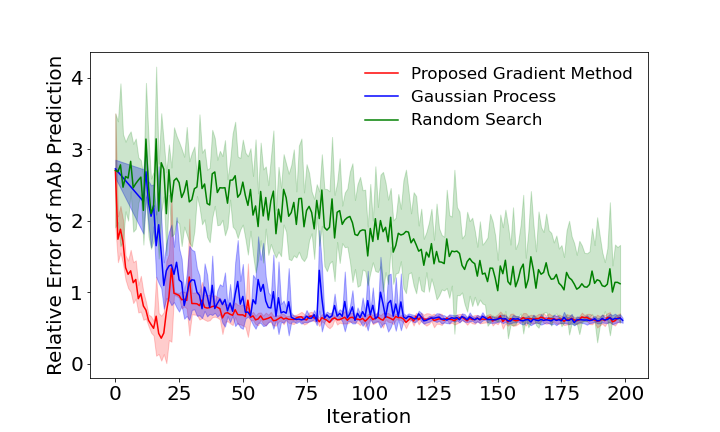}
        \caption{mAb Prediction Error}
        \label{fig:pre_1}
    \end{minipage}
    \hspace{0.01\linewidth} 
    \begin{minipage}{0.45\linewidth}
        \centering
        \includegraphics[width=\linewidth]{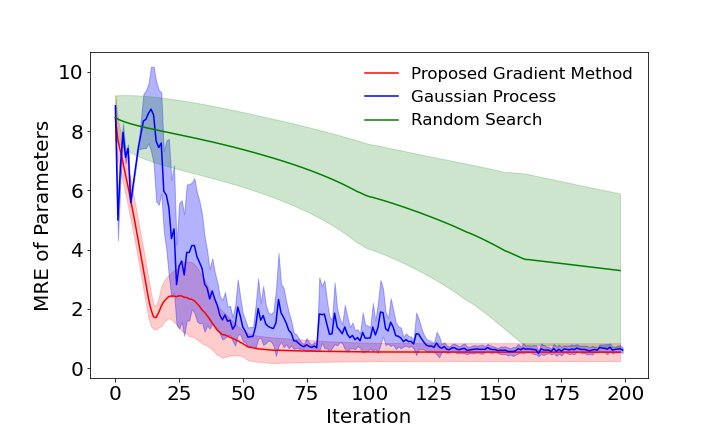}
        \caption{Parameter Estimation Error}
        \label{fig:pra_1}
    \end{minipage}
\end{figure}


{The results in Figures~\ref{fig:pre_1} and \ref{fig:pra_1} indicate that the proposed calibration approach is significantly more sample-efficient than the random design and GP-based calibration approach. The MRE for mAb production predictions using our method drops to about 60\% after 60 iterations, compared to 220\% with a random design. Furthermore, the MRE for our approach converges to about 35\% after 60 iterations for parameter estimations, significantly lower than the 700\% observed with the random policy. In contrast, the GP-based method requires around 125 iterations to reach similar convergence levels for mAb production predictions.} 

\begin{table}[h]
\centering
\caption{Comparison of runtime for various calibration methods}
\begin{tabular}{|>{\centering\arraybackslash}m{4cm}|>{\centering\arraybackslash}m{1.8cm}|>{\centering\arraybackslash}m{2cm}|>{\centering\arraybackslash}m{2cm}|>{\centering\arraybackslash}m{2cm}|}
\hline
{Time  (Hour)} & {50 Iterations} & {100 Iterations} & {200 Iterations} & {400 Iterations} \\ \hline
{Proposed Approach} & 0.131  & 0.345 & 0.791 & 1.874 \\ \hline
{Random Search}            & 0.097 & 0.186 & 0.373 & 0.86 \\ \hline
{Gaussian Process}         & $0.195$  & $0.866$ & $5.695$ & $>24$ \\ \hline
\end{tabular}
\label{tab:complexity}
\end{table}
{To assess the computational efficiency, the total runtimes are presented in Table~\ref{tab:complexity}. it indicates that the GP method’s computational time increases significantly with larger datasets due to its cubic computational complexity from matrix inversion \citep{williams2006gaussian}. In contrast, the proposed gradient method shows a sub-cubic and more manageable increase in computation time, indicating better scalability and efficiency for larger experiments.}

\section{CONCLUSION}
\label{sec: conclusion}
This study develops
a robust calibration approach for a Bio-SoS mechanistic model in the context of cell culture processes. By strategically guiding more informative data collection through the proposed gradient-based calibration approach, we significantly enhance the fidelity and predictive accuracy of the digital twin model. The empirical validation using the CHO cell culture model underscores the superiority of our approach over the traditional random design, showcasing its potential applicability across various biological systems. 
Overall, this work not only contributes to the theoretical advancements in bioprocess digital twin development but also holds promise for practical implementations that could improve the efficiency and effectiveness of biomanufacturing in the pharmaceutical industry.

\bibliographystyle{unsrtnat}
\bibliography{references} 

\begin{thebibliography}{12}
\providecommand{\natexlab}[1]{#1}
\providecommand{\url}[1]{\texttt{#1}}
\expandafter\ifx\csname urlstyle\endcsname\relax
  \providecommand{\doi}[1]{doi: #1}\else
  \providecommand{\doi}{doi: \begingroup \urlstyle{rm}\Url}\fi

\bibitem[Zheng et~al.(2024)Zheng, Harcum, Pei, and Xie]{Zheng2024}
Hua Zheng, Sarah Harcum, Jinxiang Pei, and Wei Xie.
\newblock Stochastic biological system-of-systems modelling for ipsc culture.
\newblock \emph{Communications Biology}, 7, 01 2024.
\newblock \doi{10.1038/s42003-023-05653-w}.

\bibitem[Wang et~al.(2024)Wang, Xie, and Harcum]{Keqi2024}
Keqi Wang, Wei Xie, and Sarah~W. Harcum.
\newblock Metabolic regulatory network kinetic modeling with multiple isotopic tracers for ipscs.
\newblock \emph{Biotechnology and Bioengineering}, 121\penalty0 (4):\penalty0 1335--1353, 2024.
\newblock \doi{https://doi.org/10.1002/bit.28609}.
\newblock URL \url{https://analyticalsciencejournals.onlinelibrary.wiley.com/doi/abs/10.1002/bit.28609}.

\bibitem[Corlu et~al.(2020)Corlu, Akcay, and Xie]{CORLU2020}
Canan~G. Corlu, Alp Akcay, and Wei Xie.
\newblock Stochastic simulation under input uncertainty: A review.
\newblock \emph{Operations Research Perspectives}, 7:\penalty0 100162, 2020.
\newblock ISSN 2214-7160.
\newblock \doi{https://doi.org/10.1016/j.orp.2020.100162}.
\newblock URL \url{https://www.sciencedirect.com/science/article/pii/S221471602030052X}.

\bibitem[Kennedy and O'Hagan(2001)]{kennedy2001}
Marc~C. Kennedy and Anthony O'Hagan.
\newblock Bayesian calibration of computer models.
\newblock \emph{Journal of the Royal Statistical Society: Series B (Statistical Methodology)}, 63\penalty0 (3):\penalty0 425--464, 2001.
\newblock \doi{https://doi.org/10.1111/1467-9868.00294}.
\newblock URL \url{https://rss.onlinelibrary.wiley.com/doi/abs/10.1111/1467-9868.00294}.

\bibitem[Tuo and Jeff~Wu(2016)]{Wu2014}
Rui Tuo and C.~F. Jeff~Wu.
\newblock A theoretical framework for calibration in computer models: Parametrization, estimation and convergence properties.
\newblock \emph{SIAM/ASA Journal on Uncertainty Quantification}, 4\penalty0 (1):\penalty0 767--795, 2016.
\newblock \doi{10.1137/151005841}.
\newblock URL \url{https://doi.org/10.1137/151005841}.

\bibitem[Anderson and Kurtz(2011)]{anderson2011continuous}
David~F. Anderson and Thomas~G. Kurtz.
\newblock \emph{Continuous Time Markov Chain Models for Chemical Reaction Networks}, pages 3--42.
\newblock Springer New York, New York, NY, 2011.
\newblock ISBN 978-1-4419-6766-4.
\newblock \doi{10.1007/978-1-4419-6766-4_1}.
\newblock URL \url{https://doi.org/10.1007/978-1-4419-6766-4_1}.

\bibitem[Kyriakopoulos et~al.(2018)Kyriakopoulos, Ang, Lakshmanan, Huang, Yoon, Gunawan, and Lee]{Sarantos2018}
Sarantos Kyriakopoulos, Kok~Siong Ang, Meiyappan Lakshmanan, Zhuangrong Huang, Seongkyu Yoon, Rudiyanto Gunawan, and Dong-Yup Lee.
\newblock Kinetic modeling of mammalian cell culture bioprocessing: The quest to advance biomanufacturing.
\newblock \emph{Biotechnology Journal}, 13\penalty0 (3):\penalty0 1700229, 2018.
\newblock \doi{https://doi.org/10.1002/biot.201700229}.
\newblock URL \url{https://analyticalsciencejournals.onlinelibrary.wiley.com/doi/abs/10.1002/biot.201700229}.

\bibitem[Bayram et~al.(2018)Bayram, Partal, and Orucova~Buyukoz]{Bayram2018}
M.~Bayram, T.~Partal, and G.~Orucova~Buyukoz.
\newblock Numerical methods for simulation of stochastic differential equations.
\newblock \emph{Advances in Difference Equations}, 2018\penalty0 (17), 2018.
\newblock \doi{10.1186/s13662-018-1466-5}.
\newblock URL \url{https://doi.org/10.1186/s13662-018-1466-5}.

\bibitem[Kingma and Ba(2014)]{Kingma2014AdamAM}
Diederik Kingma and Jimmy Ba.
\newblock Adam: A method for stochastic optimization.
\newblock \emph{International Conference on Learning Representations}, 2014.

\bibitem[Fearnhead et~al.(2014)Fearnhead, Giagos, and Sherlock]{Paul2014}
Paul Fearnhead, Vasilieos Giagos, and Chris Sherlock.
\newblock Inference for reaction networks using the linear noise approximation.
\newblock \emph{Biometrics}, 70\penalty0 (2):\penalty0 457--466, 2014.
\newblock \doi{https://doi.org/10.1111/biom.12152}.
\newblock URL \url{https://onlinelibrary.wiley.com/doi/abs/10.1111/biom.12152}.

\bibitem[Ghorbaniaghdam et~al.(2014)Ghorbaniaghdam, Chen, Henry, and Jolicoeur]{Ghorbaniaghdam2014}
Atefeh Ghorbaniaghdam, Jingkui Chen, Olivier Henry, and Mario Jolicoeur.
\newblock Analyzing clonal variation of monoclonal antibody-producing cho cell lines using an in silico metabolomic platform.
\newblock \emph{PLOS ONE}, 9, 2014.
\newblock URL \url{https://api.semanticscholar.org/CorpusID:5913733}.

\bibitem[Williams and Rasmussen(2006)]{williams2006gaussian}
Christopher~KI Williams and Carl~Edward Rasmussen.
\newblock \emph{Gaussian processes for machine learning}, volume~2.
\newblock MIT press Cambridge, MA, 2006.

\end{thebibliography}

\end{document}